\documentclass[reprint,aps,pra,amsmath,amssymb,superscriptaddress]{revtex4-2}

\usepackage{graphicx}
\usepackage{bm}
\usepackage{amsthm}
\usepackage[colorlinks,allcolors=blue]{hyperref}

\newtheorem{theorem}{Theorem}

\theoremstyle{definition}

\makeatletter
\let\orig@section\section
\renewcommand\section{\par\global\@nobreakfalse\orig@section}
\makeatother

\newcommand{\ket}[1]{\lvert #1 \rangle}
\newcommand{\bra}[1]{\langle #1 \rvert}

\begin{document}

\title{A three-eighths upper bound on the stabilizer-rank exponent of face-center magic states}

\author{Samuel Bevins}
\email{sjbevins@wm.edu}
\affiliation{
    Department of Physics,
    The College of William \& Mary,
    Williamsburg, VA 23187, USA
}

\date{\today}

\begin{abstract}
We give an explicit construction proving that the exact stabilizer rank of $m$ copies of the qubit face-center magic state is $O(2^{3m/8})$. Two symmetry-related fifteen-term identities combine into a twenty-four-term decomposition at twelve copies. The balanced structure of this decomposition yields a substitution that adds eight copies at a cost of eight terms per existing term. The resulting bound $\gamma_T \leq 3/8$ improves the previously published exponent $\log_2(3)/4$. 
\end{abstract}

\maketitle

\section{Introduction and Result}

Stabilizer decompositions express magic states as complex linear combinations of stabilizer states. Their size is an important parameter in classical simulation methods for circuits with nonstabilizer resources \cite{Bravyi2019, Qassim2021}. We follow the qubit orbit notation of Labib and Russo \cite{labib2026}. For a pure state $\ket{\psi}$, its exact complex stabilizer rank $\chi(\ket{\psi})$ is the smallest number of pure stabilizer states whose span contains $\ket{\psi}$. 

The qubit $T$-type representative is given by
\begin{equation}
    \ket{T} = \frac{1}{\sqrt{6+2\sqrt{3}}} \begin{bmatrix} 1 + \sqrt{3} \\ 1 + i \end{bmatrix}\,.\label{eq:1}
\end{equation}
Its Bloch vector is $(1,1,1)/\sqrt{3}$, at the center of a face of the stabilizer octahedron. Following Bravyi and Kitaev \cite{BravyiKitaev2005}, these states form the $T$-type orbit. The conventional equatorial $T$-state belongs to the $H$-type orbit in this convention. Qassim, Pashayan, and Gosset instead denote the face-center state by $\ket{F}$ \cite{Qassim2021}. 

Define
\begin{equation}
    \gamma_T = \lim_{m\to\infty}\frac{1}{m} \log_2 \chi(\ket{T}^{\otimes m})\,.\label{eq:2}
\end{equation}
Since stabilizer rank is submultiplicative under tensor products, the sequence $\log_2\chi(\ket{T}^{\otimes m})$ is subadditive. Fekete's lemma therefore establishes that this limit exists. The previously published bound for this orbit is $\gamma_T \leq \log_2(3)/4 \simeq 0.39624$ \cite{Qassim2021, labib2026}.
These are the $T$-type magic states distilled in the Bravyi--Kitaev protocol~\cite{BravyiKitaev2005}, so the cost of simulating circuits that consume $m$ of them scales with $\chi(\ket{T}^{\otimes m})$.

\begin{theorem}
    For every integer $j \geq 0$, there is an explicit decomposition of $\ket{T}^{\otimes (4+8j)}$ into exactly $3\cdot 8^j$ distinct, nonzero stabilizer summands. Consequently, for every integer $m\geq 1$,
    \begin{equation}
        \chi(\ket{T}^{\otimes m}) \leq 3\cdot 8^{\lceil(m-4)/8\rceil}= O(2^{3m/8})\,, \quad \gamma_T \leq \frac{3}{8}\,.\label{eq:chi}
    \end{equation}
\end{theorem}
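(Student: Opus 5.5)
The plan is to build the base case $j=0$ and a one-step substitution, then finish by induction. The abstract fixes the ingredients: two symmetry-related fifteen-term identities, a twenty-four-term decomposition of $\ket{T}^{\otimes 12}$, and a rule that adds eight copies at a cost of a factor of eight. The base case is a three-term decomposition of $\ket{T}^{\otimes 4}$. I expect this to be the known identity behind the $\log_2(3)/4$ bound of \cite{Qassim2021}. It comes from the fact that $\ket{T}^{\otimes 4}$ projected onto suitable Clifford-symmetric subspaces is a combination of three stabilizer states. I would write it as $\ket{T}^{\otimes 4} = \sum_{a=1}^{3} c_a\ket{s_a}$ and verify it by direct expansion in the computational basis, using the eigenstructure of the order-three Clifford $C$ that permutes $X\to Y\to Z$ and fixes $\ket{T}$.

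The central step is the twelve-copy decomposition. I would look for a fifteen-term identity for $\ket{T}^{\otimes 12}$ by taking stabilizer states invariant under $C^{\otimes 12}$ and under a subgroup of qubit permutations, and solving a linear system for the coefficients. The second identity is its image under a symmetry $U$ of $\ket{T}^{\otimes 12}$ (for example $C^{\otimes 12}$ combined with a permutation), so it has the same size. Averaging the two, $\ket{T}^{\otimes 12}=\tfrac12(\sum_{15}+U\sum_{15})$, gives at most 30 terms. The target of 24 requires six terms to cancel or merge, presumably because the identity's common terms are $U$-invariant or come in pairs that combine. I expect this cancellation to be the main obstacle, and I would find and check it by computer search over symmetric stabilizer states followed by exact symbolic verification over $\mathbb{Q}(\sqrt3,i)$.

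Next comes the recursion. "Balanced" should mean that the 24 terms split into three groups of eight, each group attached to one term of the four-copy decomposition, that is
\[
\ket{T}^{\otimes 12}=\sum_{a=1}^{3}\ket{s_a}\otimes\ket{\Phi_a}+\ldots,
\]
where each block is a tensor product of a four-qubit stabilizer state $\ket{s_a}$ with an eight-term stabilizer expression. One would then expand a decomposition of $\ket{T}^{\otimes(4+8j)}$ into $3\cdot8^j$ terms, each carrying a four-qubit factor $\ket{s_a}$. Tensoring with eight more copies replaces that factor plus $\ket{T}^{\otimes 8}$ by the corresponding eight terms. This works because the twelve-copy identity, read as the statement ``$\ket{T}^{\otimes 4}\otimes\ket{T}^{\otimes 8}$ with $\ket{T}^{\otimes 4}\mapsto\sum c_a\ket{s_a}$,'' gives a substitution rule $c_a\ket{s_a}\otimes\ket{T}^{\otimes 8}\mapsto\sum_{8}(\cdots)$ whose output terms again carry a four-qubit factor from $\{\ket{s_a}\}$. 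Establishing this closure, so that the rule can be iterated, is the second thing I would verify carefully. Each step then multiplies the count by exactly eight, and induction on $j$ gives $3\cdot8^j$ terms. To show that the summands are distinct and nonzero, I would track the labels: distinct label sequences give tensor products that differ in some block, and each block is a nonzero stabilizer state with nonzero coefficient.

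Finally, for general $m$, I would take $j=\lceil (m-4)/8\rceil$ (with $j=0$ if $m\le4$), so that $4+8j\ge m$. Then $\chi(\ket{T}^{\otimes m})\le\chi(\ket{T}^{\otimes(4+8j)})$, because projecting the surplus qubits onto a stabilizer state with nonzero overlap with $\ket{T}$, such as $\bra{0}$, maps stabilizer states to multiples of stabilizer states (or zero) and returns $\ket{T}^{\otimes m}$ up to a nonzero scalar. This gives the displayed bound, and $\gamma_T\le 3/8$ follows from the limit in the definition of $\gamma_T$.
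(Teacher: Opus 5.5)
Your last step, postselecting surplus copies onto $\bra{0}$, matches the paper. The recursion, however, cannot work as you describe it. You take $\ket{T}^{\otimes 4}=\sum_a c_a\ket{s_a}$ and posit $\ket{T}^{\otimes 12}=\sum_{a=1}^{3}\ket{s_a}\otimes\ket{\Phi_a}$, with a rule replacing $c_a\ket{s_a}\otimes\ket{T}^{\otimes 8}$ by eight terms. Because $\chi(\ket{T}^{\otimes 4})=3$, the three $\ket{s_a}$ must be linearly independent, since otherwise two of them would suffice. Comparing with $\ket{T}^{\otimes 12}=\sum_a c_a\ket{s_a}\otimes\ket{T}^{\otimes 8}$ then forces $\ket{\Phi_a}=c_a\ket{T}^{\otimes 8}$. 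So each block is an eight-term decomposition of $\ket{T}^{\otimes 8}$, and so is your term-by-term rule after projecting out $\ket{s_a}$. That would give $\gamma_T\le 3/8$ by plain tensor powers, and nothing in your plan supplies it. Your twelve-copy step is also misdirected. Merging six shared terms of two fifteen-term decompositions of the \emph{same} state gives $30-6=24$ terms for that state, worse than the fifteen you started with. Moreover, a fifteen-term decomposition of $\ket{T}^{\otimes 12}$ would already give exponent $\log_2(15)/12\approx 0.326$.

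The missing idea is to use linearly dependent blocks together with a linear substitution.
\begin{itemize}
\item The paper writes $\ket{T}^{\otimes 2}\propto L(v)$ with $v=a+zb$. The two-qubit stabilizer states $L(a)$, $L(b)$, $L(c)$ satisfy $c=a+b$.
\item The two fifteen-term identities are for $v^{\otimes 5}$, i.e.\ ten qubits. They are glued, not averaged. In $v^{\otimes 6}=v^{\otimes 5}\otimes a+v^{\otimes 5}\otimes zb$, the first identity goes in the $a$-branch and its $S$-image in the $zb$-branch. The six shared words, which satisfy $\alpha_w=z\beta_w$, collapse via $\alpha_w(wa)+z\beta_w(wb)=\alpha_w(wc)$.
\item The balanced split is by the first two-qubit letter, $v^{\otimes 6}=a\otimes A+b\otimes B+c\otimes C$. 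Because the letters are dependent, $A$, $B$, $C$ are not multiples of $U=v^{\otimes 5}$. One only has $A+C=U$ and $B+C=zU$.
\item The substitution is the cross-assigned linear map $E(a)=C$, $E(b)=(1-z)A$, $E(c)=-zB$. It respects $E(a)+E(b)=E(c)$ and satisfies $E(v)=U$. Applying it to one letter of any word expansion of $v^{\otimes k}$ therefore yields $v^{\otimes(k+4)}$ by linearity, so the closure you were worried about is automatic.
\item Distinctness follows because the three groups have pairwise distinct suffixes, so $E(a)$, $E(b)$, $E(c)$ have disjoint supports.
\end{itemize}
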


The improvement uses two ideas. First, a dictionary symmetry glues two fifteen-term identities into a twenty-four-term witness at twelve copies. Repeating that witness alone would give the exponent $\log_2(24)/12\simeq 0.38208$. Second, the witness splits into three eight-term groups, yielding a substitution with the better exponent $\log_2(8)/8=3/8$. Neither step assumes that the constructed decompositions are optimal.

\section{Three-letter Stabilizer Dictionary}

Let $\{a,b\}$ be a basis of $V = \mathbb{C}^2$ and let
\begin{equation}
    c=a+b\,,\quad z=\frac{1+i\sqrt{3}}{2}\,,\quad v=a+zb\,.\label{eq:4}
\end{equation}
We will use italic letters $a$, $b$, $c$ for vectors in $V$ and typewriter letters $\texttt{a}$, $\texttt{b}$, $\texttt{c}$ for the corresponding one-letter words. A word such as $\texttt{aabba}$ represents the ordered tensor product $a\otimes a\otimes b \otimes b \otimes a$. 
We will also use $z^2=z-1$, $z^3=-1$, and $z(1-z)=1$. Let $R_k$ be the smallest number of nonzero words in an expansion of $v^{\otimes k}$. 

Define the map $L:\, V\to (\mathbb{C}^2)^{\otimes 2}$ by the mapping
\begin{align}
    L(a) &= u_0 = (i,i,i,-i)^{\rm T}\,,\nonumber \\ L(b) &= u_1 = (1-i,0,0,1+i)^{\rm T}\,, \label{eq:L}\\ L(c) &= u_2 = (1,i,i,1)^{\rm T} = u_0+u_1\,. \nonumber
\end{align}
The respective pairs of positive Pauli stabilizers are $(ZX,XZ)$, $(ZZ,XY)$, and $(XX,ZY)$, with juxtaposition denoting tensor products. Let $f = (1+\sqrt{3}, 1+i)^{\rm T}$. Direct multiplication yields
\begin{equation}
    f^{\otimes 2} = \kappa L(v)\,,\quad \kappa = (1+\sqrt{3})(1-i)\,.\label{eq:f2}
\end{equation}
Since $f$ is proportional to $\ket{T}$, every $r$-word expression for $v^{\otimes k}$ gives a stabilizer decomposition with at most $r$ terms on $2k$ qubits. Thus $\chi(\ket{T}^{\otimes 2k})\leq R_k$. The map $L$ is injective, and distinct dictionary words produce distinct physical stabilizer rays. 

Our starting point is the three-word identity
\begin{equation}
    v^{\otimes 2} = (1-z)\texttt{aa}-\texttt{bb} + z\texttt{cc}\,,\quad R_2\leq 3\,.\label{eq:v2}
\end{equation}
It reproduces the optimal four-copy value $\chi(\ket{T}^{\otimes 4})=3$~\cite{labib2026}.

\section{Symmetry Gluing}
The construction begins with the exact identity
\begin{equation}
    \begin{aligned}
        v^{\otimes 5} = &-(1+z)\texttt{aaaaa} + (2z-1)\texttt{bbbbb} + (z-2)\texttt{ccccc} \\ &+z(\texttt{aabba} + \texttt{bbaaa} - \texttt{abbab} - \texttt{baabb}) \\ &+\texttt{acaca} + \texttt{cacaa} + \texttt{accac} +\texttt{caacc} \\ &+(1-z) (\texttt{bbccc} + \texttt{ccbbc} + \texttt{bcbcb} + \texttt{cbcbb})\,.
    \end{aligned}\label{eq:exactidentity}
\end{equation}
Appendix A gives a finite coordinate proof. Write $\alpha_w$ for the coefficient of word $w$ and take $\alpha_w =0$ outside the listed support. 

Define the linear map $\tau:\, V \to V$ by
\begin{equation}\tau(a) = b\,,\quad \tau(b) = -c\,,\quad \tau(c) = -a\,.
\end{equation} 
This map satisfies $\tau^3=I$ and $\tau(v) = -zv$. Therefore,
\begin{equation}
    S =\tau\otimes \tau \otimes \tau \otimes \tau \otimes \tau^2
\end{equation}
fixes $v^{\otimes 5}$. Acting with $S$ on Eq. \eqref{eq:exactidentity} yields a second fifteen-word expansion with coefficients $\beta_w$. 

The two supports overlap in exactly six words. On each shared word, $\alpha_w=z\beta_w$.
\begin{equation}
    \begin{array}{c|cc}
        \text{Words} & \alpha_w & \beta_w  \\\hline \texttt{abbab}, \texttt{baabb} & -z & -1 \\ \texttt{acaca}, \texttt{cacaa} & 1 & 1-z \\ \texttt{bbccc}, \texttt{ccbbc} & 1-z & -z
    \end{array}\label{eq:sixwords}
\end{equation}
Append $a$ to each word of the first representation, and $zb$ to the second. For any shared word $w$,
\begin{equation}
    \alpha_w(wa) + z\beta_w(wb) = \alpha_w(wc)\,.
\end{equation}
The equality holds because $c=a+b$. Each match replaces two terms by one. We obtain
\begin{equation}
    v^{\otimes 6} = \sum_{w\in W} d_ww\,,\quad |W| = 15+15-6=24\,.\label{eq:v6}
\end{equation}
\begin{table}[t]
\caption{The exact twenty-four-word identity $v^{\otimes6}=\sum_{w\in W}d_ww$. Each first-letter group has eight words, and all twenty-four five-letter suffixes are distinct.}
\label{tab:witness}
\centering
\scriptsize
\setlength{\tabcolsep}{2.5pt}
\begin{ruledtabular}
\begin{tabular}{ll@{\hspace{4pt}}ll@{\hspace{4pt}}ll}
$a$-group & $d_w$ & $b$-group & $d_w$ & $c$-group & $d_w$\\
$\texttt{aaaaaa}$&$-(1+z)$&$\texttt{baabbc}$&$-z$&$\texttt{caacca}$&$1$\\
$\texttt{aaaabb}$&$1+z$&$\texttt{babacb}$&$-z$&$\texttt{cacaac}$&$1$\\
$\texttt{aabbaa}$&$z$&$\texttt{bbaaaa}$&$z$&$\texttt{cbbcab}$&$1-z$\\
$\texttt{aaccbb}$&$-1$&$\texttt{bbbbba}$&$2z-1$&$\texttt{cbcbba}$&$1-z$\\
$\texttt{ababcb}$&$-z$&$\texttt{bbbbcb}$&$2z-1$&$\texttt{ccaabb}$&$-1$\\
$\texttt{abbabc}$&$-z$&$\texttt{bbcccc}$&$1-z$&$\texttt{ccbbcc}$&$1-z$\\
$\texttt{acacac}$&$1$&$\texttt{bcbcba}$&$1-z$&$\texttt{ccccab}$&$z-2$\\
$\texttt{accaca}$&$1$&$\texttt{bccbab}$&$1-z$&$\texttt{ccccca}$&$z-2$
\end{tabular}
\end{ruledtabular}
\end{table}
Table \ref{tab:witness} contains the full identity. Exactly eight words begin with each of $a$, $b$, $c$. All twenty-four suffixes obtained by deleting the first letter are distinct. This property drives the recursion.

\section{An Eight-Branch Substitution}
Group Eq. \eqref{eq:v6} by its first letter,
\begin{equation}
    v^{\otimes 6} = a\otimes A + b\otimes B + c\otimes C\,,
\end{equation}
where $A,B,C \in V^{\otimes 5}$ each contain eight words. Write $U = v^{\otimes 5}$. Comparing the coefficients of $a$ and $b$ in the first slot gives the tensor identities
\begin{equation}
    A + C = U\,,\quad B+C = zU\,.\label{eq:tensoridentities}
\end{equation}
These are equalities of tensors, not of coefficient lists in the linearly dependent dictionary. 

Define the map $E:\, V\to V^{\otimes 5}$ by
\begin{equation}
    E(a) = C\,,\quad E(b) = (1-z)A\,,\quad E(c) = -zB\,.\label{eq:16}
\end{equation}
Using the relations in Eq. \eqref{eq:tensoridentities} together with $z^2=z-1$,
\begin{equation}
    \begin{aligned}
         E(a) + E(b) & = -zB =E(c)\,, \\ E(v) &= C+z(1-z)A = U\,.
    \end{aligned}\label{eq:E}
\end{equation}
Therefore, we see that $E$ is linear and $E(v) = v^{\otimes 5}$. Each letter expands into eight nonzero words. Their supports are disjoint because the first-letter groups of Table \ref{tab:witness} have pairwise distinct suffixes. 

Applying $E$ to the first slot of an $r$-word decomposition of $v^{\otimes k}$ produces $8r$ words for $v^{\otimes (k+4)}$. Words with different input tails retain different tails. For a shared tail, distinct input letters have disjoint output supports. No collisions or cancellations occur, and hence
\begin{equation}
    R_{k+4}\leq 8R_k\,.\label{Rineq}
\end{equation}
\begin{proof}[Proof of Theorem 1.]
    Starting with the three-word decomposition in Eq. \eqref{eq:v2}, apply the substitution $E$ from Eq. \eqref{eq:16} to the first slot $j$ times. Each application replaces every word with eight distinct nonzero words without collisions. This constructs $3\cdot 8^j$ distinct, nonzero words for $v^{\otimes (2+4j)}$. Equations \eqref{eq:L} and \eqref{eq:f2} map these terms to physical stabilizer states on $4+8j$ qubits. 

    Postselecting a qubit of a stabilizer state onto $\bra{0}$ gives zero or an unnormalized stabilizer state \cite{gottesman1997stabilizer}, 
    so the number of terms cannot increase. Since $\langle0|T\rangle \neq 0$, postselecting $4+8j-m$ copies leaves a nonzero multiple of $\ket{T}^{\otimes m}$.
    Choose $j = \lceil (m-4)/8\rceil$ to obtain Eq. \eqref{eq:chi} for every $m\geq 1$. In addition, $j < (m-4)/8 +1$ gives a constant prefactor independent of $m$, establishing $\gamma_T \leq 3/8$.  
\end{proof}

\section{Discussion}
The result reduces the known constructive upper bound on $\gamma_T$ from $\log_2(3)/4\simeq 0.39624$ to $3/8=0.375$. It does not determine the optimal exponent, establish a lower bound, or guarantee a particular simulation runtime. We also do not separate $T$-type and $H$-type orbit exponents. Whether symmetry gluing improves bounds for other magic-state orbits remains open.

\begin{acknowledgments}
The author thanks Yunus Bidav for helpful discussions, particularly regarding the design and development of many of the AI tools used in this work.
\end{acknowledgments}

\section*{Author contributions and use of generative AI}
The author directed this work, verified all results, wrote the final manuscript, and takes full responsibility for its content. The results emerged primarily from AI-assisted research conducted using Claude Opus 5.5 (Anthropic) and Qwen3.8-27b (Alibaba Cloud) running locally. AI systems played a central role in discovering the algebraic identities, developing the constructions and proofs, and producing computational verification code, which is available upon request. AI tools also assisted with drafting and editing the text.

\appendix

\section{Coordinate check of the fifteen-word identity}
Write the right side of Eq. \eqref{eq:exactidentity} as $P\otimes a + Q\otimes b + H\otimes c$, with
\begin{equation}
\begin{aligned}
    P&=-(1+z)\texttt{aaaa} + z(\texttt{aabb} +\texttt{bbaa}) + \texttt{acac} + \texttt{caca} \,,  \\ Q&= (2z-1)\texttt{bbbb} - z(\texttt{abba} + \texttt{baab}) \\ &\quad \,\, +(1-z)(\texttt{bcbc} + \texttt{cbcb})\,, \\ H&= (z-2)\texttt{cccc} + \texttt{acca} + \texttt{caac} \\ &\quad \,\, +(1-z)(\texttt{bbcc} + \texttt{ccbb})\,. 
    \end{aligned}
\end{equation}
\begin{table}[t]
\caption{Coordinate check of Eq.~\eqref{eq:exactidentity}. The seven orbits contain all sixteen basis words.}
\label{tab:orbit}
\centering
\begin{ruledtabular}
\begin{tabular}{lrrrr}
Basis word & Size & $P$ & $Q$ & $H$\\
$\texttt{aaaa}$&1&$1-z$&$0$&$z$\\
$\texttt{aaab}$&4&$1$&$0$&$z-1$\\
$\texttt{aabb}$&2&$z$&$0$&$-1$\\
$\texttt{abab}$&2&$1$&$1-z$&$z-2$\\
$\texttt{abba}$&2&$0$&$-z$&$z-1$\\
$\texttt{abbb}$&4&$0$&$1-z$&$-1$\\
$\texttt{bbbb}$&1&$0$&$1$&$-z$
\end{tabular}
\end{ruledtabular}
\end{table}
Because $c=a+b$, Eq. \eqref{eq:exactidentity} follows from $P+H = v^{\otimes 4}$ and $Q+H = zv^{\otimes 4}$. The three tensors are invariant under $(12)(34)$ and $(13)(24)$, so it suffices to check the seven orbit representatives in Table II. 
\\
\section{The twenty-four-word witness}
Table \ref{tab:witness} gives the coefficients in Eq. \eqref{eq:v6}. To reconstruct it, apply $S$ to Eq. \eqref{eq:exactidentity}, append $a$ to the nine unmatched words of Eq. \eqref{eq:exactidentity}, and append $b$ with an extra factor $z$ to the nine unmatched words of its image. Append $c$ to the six matched words. Grouping by first letter then determines $A$, $B$, $C$, and the substitutions.

\bibliography{references}

\end{document}